\NewDocumentCommand\col{g}{%
  \IfNoValueTF{#1}{\ensuremath{\mathrm{vec}}}{\ensuremath{\mathrm{vec}}\of{#1}}%
}
\NewDocumentCommand\of{og}{%
  \IfNoValueTF{#1}%
    { \IfNoValueTF{#2}{}{\!\({#2}\)} }%
    { \IfNoValueTF{#2}{\!\[{#1}\]}{\!\{{#2}\}} }%
}
\DeclareMathOperator{\Diff}{\ipaclap{D}{\raisebox{.204em}{\textpalhook}\kern.44em}\kern-.1em}
\NewDocumentCommand\diff{g}{%
  \IfNoValueTF{#1}
  {\text{\texthtd}}
  {\text{\texthtd}\of{#1}}%
}
\RenewDocumentCommand\ln{g}{%
  \IfNoValueTF{#1}{\mathrm{ln\ }}{\mathrm{ln}\of{#1}}%
}
\NewDocumentCommand\Real{og}{%
  \IfNoValueTF{#1}%
    { \IfNoValueTF{#2}{\mathcal{R}\!\!\mathpzc{e}}{\mathcal{R}\!\!\mathpzc{e}\!\{{#2}\}} }%
    { \IfNoValueTF{#2}{\mathcal{R}\!\!\mathpzc{e}\!\[{#1}\]}{\mathcal{R}\!\!\mathpzc{e}\!\({#2}\)} }%
}
\NewDocumentCommand\Imag{og}{%
  \IfNoValueTF{#1}%
    { \IfNoValueTF{#2}{\mathcal{I}\!\!\mathpzc{m}}{\mathcal{I}\!\!\mathpzc{m}\!\{{#2}\}} }%
    { \IfNoValueTF{#2}{\mathcal{I}\!\!\mathpzc{m}\!\[{#1}\]}{\mathcal{I}\!\!\mathpzc{m}\!\({#2}\)} }%
}
\RenewDocumentCommand\cos{g}{%
  \IfNoValueTF{#1}{\mathrm{cos}}{\mathrm{cos}\of{#1}}%
}
\RenewDocumentCommand\sin{g}{%
  \IfNoValueTF{#1}{\mathrm{sin}}{\mathrm{sin}\of{#1}}%
}
\RenewDocumentCommand\tan{g}{%
  \IfNoValueTF{#1}{\mathrm{tan}}{\mathrm{tan}\of{#1}}%
}
\RenewDocumentCommand\arccos{g}{%
  \IfNoValueTF{#1}{\mathrm{arccos}}{\mathrm{arccos}\of{#1}}%
}
\RenewDocumentCommand\arcsin{g}{%
  \IfNoValueTF{#1}{\mathrm{arcsin}}{\mathrm{arcsin}\of{#1}}%
}
\RenewDocumentCommand\arctan{g}{%
  \IfNoValueTF{#1}{\mathrm{arctan}}{\mathrm{arctan}\of{#1}}%
}
\RenewDocumentCommand\cot{g}{%
  \IfNoValueTF{#1}{\mathrm{cot}}{\mathrm{cot}\of{#1}}%
}
\newcommand{\floor}[1]{\ensuremath{\left\lfloor #1 \right\rfloor}}
\NewDocumentCommand\tr{g}{%
  \IfNoValueTF{#1}{\mathrm{tr}}{\mathrm{tr}\of{#1}}%
}
\NewDocumentCommand\diag{og}{%
  \IfNoValueTF{#1}%
    { \IfNoValueTF{#2}{\ensuremath{\mathrm{diag}}}{\ensuremath{\mathrm{diag}\of{#2}}} }%
    { \IfNoValueTF{#2}{\ensuremath{\mathrm{diag}\of[#1]}}{\ensuremath{\mathrm{diag}\of[]{#2}}} }%
}
\RenewDocumentCommand\exp{g}{%
  \IfNoValueTF{#1}{\ensuremath{\mathrm{exp}}}{\ensuremath{\mathrm{exp}}\of{#1}}%
}
\newcommand{\E}[2][]{\Operator[#1]{E}{#2}}
\NewDocumentCommand\C{g}{%
  \IfNoValueTF{#1}{\mathrm{Cov}}{\mathrm{Cov}\of{#1}}%
}
\renewcommand{\(}{\ensuremath{\left(}}
\renewcommand{\)}{\ensuremath{\right)}}
\renewcommand{\[}{\ensuremath{\left[}}
\renewcommand{\]}{\ensuremath{\right]}}
\let\oldBracketLeft\{
\let\oldBracketRight\}
\renewcommand{\{}{\ensuremath{\left\oldBracketLeft}}
\renewcommand{\}}{\ensuremath{\right\oldBracketRight}}
\NewDocumentCommand\F{og}{%
  \IfNoValueTF{#1}%
    { \IfNoValueTF{#2}{\mathcal{F}}{\mathcal{F}\!\{{#2}\}} }%
    { \IfNoValueTF{#2}{\mathcal{F}\!\[{#1}\]}{\mathcal{F}\!\({#2}\)} }%
}
\NewDocumentCommand\FInv{og}{%
  \IfNoValueTF{#1}%
    { \IfNoValueTF{#2}{\mathcal{F}^{-1}}{\mathcal{F}^{-1}\!\{{#2}\}} }%
    { \IfNoValueTF{#2}{\mathcal{F}^{-1}\!\[{#1}\]}{\mathcal{F}^{-1}\!\({#2}\)} }%
}
\NewDocumentCommand\rect{g}{%
  \IfNoValueTF{#1}
  {\ensuremath{\mathrm{rect}}}
  {\ensuremath{\mathrm{rect}\of{#1}}}%
}
\NewDocumentCommand\sinc{g}{%
  \IfNoValueTF{#1}
  {\ensuremath{\mathrm{sinc}}}
  {\ensuremath{\mathrm{sinc}\of{#1}}}%
}
\NewDocumentCommand\supp{g}{%
  \IfNoValueTF{#1}
  {\ensuremath{\mathrm{supp}}}
  {\ensuremath{\mathrm{supp}\of{#1}}}%
}
\let\oldMathcal\mathcal
\renewcommand{\mathcal}[1]{\ensuremath{\oldMathcal{#1}}}
\def\foreach#1#2#3{%
  \@test@foreach{#1}{#2}#3,\@end@token
}
\def\@swallow#1{}
\def\@test@foreach#1#2{%
  \@ifnextchar\@end@token%
    {\@swallow}%
    {\@foreach{#1}{#2}}%
}
\def\@foreach#1#2#3,#4\@end@token{%
  #1{#2}{#3}%
  \@test@foreach{#1}{#2}#4\@end@token%
}
\newtheorem{theorem}{Theorem}[section]
\newtheorem{lemma}[theorem]{Lemma}
\newtheorem{remark}{Remark}
\newenvironment{proof}[1][Proof]{\begin{trivlist}
\item[\hskip \labelsep {\bfseries #1}]}{\end{trivlist}}
\newenvironment{definition}[1][Definition]{\begin{trivlist}
\item[\hskip \labelsep {\bfseries #1}]}{\end{trivlist}}
\def\l|{\left|}
\def\r|{\right|}
\def\l({\left(}
\def\r){\right)}
\def\l[{\left[}
\def\r]{\right]}
\renewcommand{\E}[1]{\mathbb{E}\left[ #1 \right]}
\newcommand{\half}{{\frac{1}{2}}}
\begin{document} 
\title{Scalable Hash-Based Estimation of Divergence Measures}

\author{Morteza Noshad}

\author{Alfred O. Hero III}
\affil{University of Michigan,
Electrical Engineering and Computer Science,
Ann Arbor, Michigan, U.S.A}

\renewcommand\Authands{ and }

\date{}
\maketitle

\begin{abstract}
We propose a scalable divergence estimation method based on hashing. Consider two continuous random variables $X$ and $Y$ whose densities have bounded support. We consider a particular locality sensitive random hashing, and consider the ratio of samples in each hash bin having non-zero numbers of Y samples. We prove that the weighted average of these ratios over all of the hash bins converges to f-divergences between the two samples sets. We show that the proposed estimator is optimal in terms of both MSE rate and computational complexity. We derive the MSE rates for two families of smooth functions; the H\"{o}lder smoothness class and differentiable functions. In particular, it is proved that if the density functions have bounded derivatives up to the order $d/2$, where $d$ is the dimension of samples, the optimal parametric MSE rate of $O(1/N)$ can be achieved.  The computational complexity is shown to be $O(N)$, which is optimal. To the best of our knowledge, this is the first empirical divergence estimator that has optimal computational complexity and achieves the optimal parametric MSE estimation rate.

\end{abstract}
\section{Introduction}

Information theoretic measures such as Shannon entropy, mutual information, and the Kullback-Leibler (KL) divergence have a broad range of applications in information theory, statistics and machine learning \cite{cover2012,moon2014,structure2016}.
When we have two or more data sets and we are interested in finding the correlation or dissimilarity between them, Shannon mutual information or KL-divergence is often used.
R\'{e}nyi and f-divergence measures are two well studied generalizations of KL-divergence which comprise many important divergence measures such as KL-divergence, total variation distance, and $\alpha$-divergence \cite{rrnyi1961,ali}.  

Non\hyp parametric estimators are a major class of divergence estimators, for which minimal assumptions on the density functions are considered. Some of the non-parametric divergence estimators are based on density plug-in estimators such as $k$-NN \cite{poczos2011}, KDE \cite{Poczos2014_2}, and histogram \cite{wang2009}. A few researchers, on the other hand, have proposed direct estimation methods such as graph theoretic nearest neighbor ratio (NNR) \cite{Noshad2017}. In general, plug-in estimation methods suffer from high computational complexity, which make them unsuitable for large scale applications.

Recent advances on non-parametric divergence estimation have been focused on the MSE convergence rates of the estimator. 
Singh et al in \cite{Poczos2014_2} proposed a plug-in KDE estimator for R\'{e}nyi divergence that achieves the MSE rate of $O(1/N)$ when the densities are at least $d$ times differentiable, and the support boundaries are sufficiently smooth. Kandasamy et al proposed a similar plug-in KDE estimator and extend the optimal MSE rate to densities that are at least $d/2$ differentiable \cite{kandasamy}. However, they ignore a major source of error due to the boundaries. 
Moon et al proposed a weighted ensemble method to improve the MSE rate of plug-in KDE estimators \cite{Kevin16}. The proposed estimator for f-divergence achieves the optimal MSE rate when the densities are at least $(d+1)/2$ times differentiable. They also assume stringent smoothness conditions at the support set boundary. 

Noshad et al proposed a graph theoretic direct estimation method based on nearest neighbor ratios (NNR) \cite{Noshad2017}. Their estimator is simple and computationally more tractable than other competing estimators, and can achieve the optimal MSE rate of $O(1/N)$ for densities that are at least $d$ times differentiable. Although their basic estimator does not require any smoothness assumptions on the support set boundary, the ensemble estimator variant of their estimator does. 

In spite of achieving the optimal theoretical MSE rate by aforementioned estimators, there remain serious. The first challenge is the high computational complexity of the estimator. Most KDE based estimators require runtime complexity of $O(N^2)$, which is not suitable for large scale applications. The NNR estimator proposed in \cite{Noshad2017} has the runtime complexity of $O(kN\log N)$, which is faster than the previous estimators. However, in \cite{Noshad2017} they require $k$ to grow sub-linearly with $N$, which results in much higher complexity than linear runtime complexity. The other issue is the smoothness assumptions made on the support set boundary. Almost all previously proposed estimators assume extra smoothness conditions on the boundaries, which may not hold practical applications.  
For example, the method proposed in \cite{Poczos2014_2} assumes that the density derivatives up to order $d$ vanish at the boundary. Also it requires numerous computations at the support boundary, which become complicated when the dimension increases. The Ensemble NNR estimator in \cite{Noshad2017} assumes that the density derivatives vanish at the boundary.
To circumvent this issue, Moon et al \cite{Kevin16} assumed smoothness conditions at the support set boundary. However, these conditions may not hold in practice.

In this paper we propose a low complexity divergence estimator that can achieve the optimal MSE rate of $O(1/N)$ for the densities with bounded derivatives of up to $d/2$. Our estimator has optimal runtime complexity of $O(N)$, which makes it an appropriate tool for large scale applications. Also in contrast to other competing estimators, our  estimator does not require stringent smoothness assumptions on the support set boundary. 

The structure of the proposed estimator borrows ideas from hash based methods for KNN search and graph constructions problems \cite{hash_KNN_graph, LSH_KNN}, as well as from the NNR estimator proposed in \cite{Noshad2017}. The advantage of hash based methods is that they can be used to find the approximate nearest neighbor points with lower complexity as compared to the exact $k$-NN search methods. This suggests that fast and accurate algorithms for divergence estimation may be derived from hashing approximations of k-NN search.
Noshad et al \cite{Noshad2017} consider the $k$-NN graph of Y in the joint data set $(X,Y)$, and show that the average exponentiated ratio of the number of X points to the number of Y points among all $k$-NN points is proportional to the  R\'{e}nyi divergence between the X and Y densities. It turns out that for estimation of the density ratio around each point we really do not need to find the exact $k$-NN points, but only need sufficient local samples from X and Y around each point. By using a randomized locality sensitive hashing (LSH), we find the closest points in Euclidean space. In this manner, applying ideas from the NNR estimation and hashing techniques to KNN search problem, we obtain a more efficient divergence estimator.
Consider two sample sets $X$ and $Y$ with a bounded density support. We use a particular two-level locality sensitive random hashing, and consider the ratio of samples in each bin with a  number of Y samples. We prove that the weighted average of these ratios over all of the bins can be made to converge almost surely to f-divergences between the two samples populations.
We also argue that using the ensemble estimation technique provided in \cite{moon2014}, we can achieve the optimal parametric rate of $O(1/N)$. Furthermore, using a simple algorithm for  online estimation method has $O(N)$ complexity and $O(1/N)$ convergence rate, which is the first optimal online estimator of its type. 


\section{Hash-Based Estimation}
In this section, we first introduce  the f-divergence measure and propose a hash-based estimator. We outline the main theoretical results which will be proven in section \ref{Proof_Section}.

Consider two density functions $f_1$ and $f_2$ with common bounded support set $\mathcal{X}\subseteq \mathbb{R}^d$.

The f-divergence is defined as follows \cite{ali}.

\begin{align}\label{def_f_div}
D_g\left(f_1(x)\vert\vert f_2(x)\right) & :=\int g\of{\frac{f_1(x)}{f_2(x)}}f_2(x)dx \nonumber\\
& =\mathbb{E}_{f_2}\of[g\of{\frac{f_1(x)}{f_2(x)}}],
\end{align}
where $g$ is a smooth and convex function such that $g(1)=0$. KL-divergence, Hellinger distance and total variation distance are particular cases of this family. Note that for estimation, we don’t need convexity of $g$ and $g(1)=0$. conditions. Assume that the densities are lower bounded by $C_L>0$ and upper bounded by $C_U$. Assume $f_1$ and $f_2$ belong to the H\"{o}lder smoothness class with parameter $\gamma$:


\begin{definition}\label{Holder}
Given a support $\mathcal{X} \subseteq \mathbb{R}^d$, a function $f:\mathcal{X} \to \mathbb{R}$ is called H\"{o}lder continuous with parameter $0<\gamma\leq 1$, if there exists a positive constant $G_f$, possibly depending on $f$, such that 
\begin{equation}\label{Holder_eq}
|f(y)-f(x)|\leq G_f\|y-x\|^{\gamma},
\end{equation}
for every $x\neq y \in \mathcal{X}$.
\end{definition}

The function $g$ in \eqref{def_f_div} is also assumed to be Lipschitz continuous; i.e. $g$ is H\"{o}lder continuous with $\gamma=1$.


\begin{remark}
The $\gamma$-H\"{o}lder smoothness family comprises a large class of continuous functions including continuously differentiable functions and Lipschitz continuous functions.
Also note that for $\gamma > 1$, any $\gamma$–H\"{o}lder continuous function on any bounded and continuous support is constant.  
\end{remark}

\begin{definition}
[Hash-Based Divergence Estimator: ] Consider the i.i.d samples $X=\{X_1,...,X_N\}$ drawn from $f_1$ and $Y=\{Y_1,...,Y_M\}$ drawn from $f_2$. Define the fraction $\eta:=M/N$. We define the set $Z:=X\cup Y$.  
We define a positive real valued constant $\epsilon$ as a user-selectable parameter of the estimator to be defined in \ref{est_f_def}. We define the hash function $H_1: \mathbb{R}^d\to \mathbb{Z}^d$ as 
\begin{align}\label{H1_def}
H_1(x)=\of[h_1(x_1), h_1(x_2), ... ,h_1(x_d)], 
\end{align}

where $x_i$ is the projection of $x$ on the $i$th coordinate, and $h_1(x): \mathbb{R}\to \mathbb{Z}$ is defined as

\begin{align}
h_1(x)=\floor{\frac{x+b}{\epsilon}},
\end{align}
for fixed $b$. Let $\mathcal{F}:=\{1,2,..,F\}$, where $F:=c_HN$ and $c_H$ is a fixed real number. We define a random hash function $H_2:\mathbb{Z}^d\to \mathcal{F}$ with a uniform density on the output and consider the combined hashing $H(x):=H_2(H_1(x))$, which maps the points in $\mathbb{R}^d$ to $\mathcal{F}$.

Consider the mappings of the sets $X$ and $Y$ using the hash function $H(x)$, and define the vectors $\mathcal{N}$ and  $\mathcal{M}$ to respectively contain the number of collisions for each output bucket from the set $\mathcal{F}$. We represent the bins of the vectors $\mathcal{N}$ and  $\mathcal{M}$ respectively by $N_i$ and $M_i$, $1\leq i \leq F$. 

The hash based f-divergence estimator is defined as
\begin{equation}\label{est_f_def}
\widehat{D}_{g}(X,Y):=\max\{\frac{1}{M}\sum_{\substack{i\leq F \\ M_i>0}} M_i\widetilde{g}\of{\frac{\eta N_i}{M_i}},0\},
\end{equation}
where $\widetilde{g}(x):=\max\{g(x),g\of{C_L/C_U}\}$. 

\end{definition}

Note that if the densities $f_1$ and $f_2$ are almost equal, then for each point $Y_i$, $N_i \approx M_i$, and thus $\widehat{D}_{\alpha}(X,Y)$ and $\widehat{D}_{g}(X,Y)$ tend to zero, as required. 
In the following theorems we state upper bounds on the bias and variance rates.
Let $\mathbb{B}[\hat{T}]=\mathbb{E}[\hat{T}]-T$ and $\mathbb{V}[\hat{T}]=\mathbb{E}[\hat{T}^2]-\mathbb{E}[\hat{T}]^2$, respectively represent the bias and variance of $\hat{T}$, which is an estimator of the parameter $T$. Then, the following provides a bound on the bias of the proposed estimator.

\begin{theorem} \label{bias_theorem}
Assume that $f_1$ and $f_2$ are density functions with bounded common support set $\mathcal{X}\in\mathbb{R}^d$ and satisfying $\gamma$-H\"{o}lder smoothness.
 The bias of the proposed estimator for f-divergence with function $g$ can be bounded as
\begin{align} \label{bias_Renyi}
\mathbb{B}\of[\widehat{D}_{g}(X,Y)]= O\of{\epsilon^\gamma}+O\of{\frac{1}{N\epsilon^d}},\nonumber
\end{align}
\end{theorem}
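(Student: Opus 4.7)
The plan is to split the bias into three manageable pieces: a discretization error from quantizing space into cubes of side $\epsilon$, a finite-sample error from using empirical counts in place of their expectations, and an auxiliary error from collisions in the second hash $H_2$. The analysis is cleanest if I first condition on $H_2$ and work cube-by-cube in the $H_1$ partition, then add back the effect of $H_2$ as a perturbation.

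First I would define the ``no-collision'' auxiliary estimator $\widetilde{D}_g := \frac{1}{M}\sum_{j:\,M_{C_j}>0} M_{C_j}\widetilde{g}(\eta N_{C_j}/M_{C_j})$, where the sum ranges over the non-empty $H_1$-cubes $C_j$ (of which there are at most $O(\epsilon^{-d})$ since $\mathcal{X}$ is bounded). Within each cube, $N_{C_j}$ and $M_{C_j}$ are independent binomials with parameters $(N,P_j)$ and $(M,Q_j)$, where $P_j := \int_{C_j} f_1$ and $Q_j := \int_{C_j} f_2$. Using H\"older smoothness together with the density bounds $C_L \leq f_i \leq C_U$, I would show that $P_j/Q_j$ differs from $f_1(x)/f_2(x)$ by $O(\epsilon^\gamma)$ uniformly over $x \in C_j$; Lipschitz continuity of $g$ (and hence of $\widetilde{g}$ on the relevant interval) then gives that $\sum_j (Q_j)\, \widetilde{g}(P_j/Q_j)$ is a Riemann sum approximating $\mathbb{E}_{f_2}[g(f_1/f_2)]$ with error $O(\epsilon^\gamma)$. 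This accounts for the first term in the bound.

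Next I would extract the finite-sample contribution via a Taylor expansion of $\widetilde{g}(\eta N_{C_j}/M_{C_j})$ around $\eta P_j/Q_j$ in the fluctuations $U_j := N_{C_j}/N - P_j$ and $V_j := M_{C_j}/M - Q_j$. The first-order terms vanish in expectation, while the second-order terms contribute a bias of order $\mathbb{E}[U_j^2]/Q_j^2 + P_j^2\,\mathbb{E}[V_j^2]/Q_j^4 = O(1/(N Q_j))$ per cube. Multiplying by the weight $M_{C_j}/M$, whose expectation is $Q_j$, and summing over the $O(\epsilon^{-d})$ occupied cubes yields the $O(1/(N\epsilon^d))$ term. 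The collisions of $H_2$ are then handled as a correction: the expected number of colliding pairs among non-empty cubes is $O(\epsilon^{-2d}/F) = O(1/(N\epsilon^{2d}))$, and since $\widetilde{g}$ is bounded on $[C_L/C_U,\, C_U/C_L]$, each collision contributes $O(1/M)$ to $|\widehat{D}_g - \widetilde{D}_g|$; these are absorbed into the two declared error terms in the regime $N\epsilon^d \to \infty$. Finally, the outer $\max\{\cdot,0\}$ clip can only shrink negative values, so since $D_g \geq 0$ its bias contribution is non-positive and exponentially small by concentration.

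The main obstacle is the finite-sample step: a naive Taylor expansion of $1/M_{C_j}$ breaks down when $M_{C_j}$ is zero or very small, which can occur with non-trivial probability for cubes whose mass $Q_j$ is tiny. This is precisely why the estimator uses $\widetilde{g}(x) = \max\{g(x), g(C_L/C_U)\}$ rather than $g$ directly --- the clipping gives a uniform bound on the integrand and lets me split cubes into a ``well-populated'' regime (where $M_{C_j} \geq \tfrac{1}{2} M Q_j$ with probability $1 - e^{-\Omega(N\epsilon^d)}$ by Chernoff, so the Taylor expansion is valid and controls the bias) and a ``sparse'' regime (whose per-cube contribution is bounded pointwise by the truncation and whose aggregate mass is exponentially small). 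Making this decomposition rigorous and showing that the two regimes combine to give exactly $O(1/(N\epsilon^d))$, rather than a worse rate, is where most of the technical work will concentrate.
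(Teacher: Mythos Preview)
Your three-part decomposition (discretization, finite-sample fluctuation, $H_2$ collisions) matches the paper's overall architecture, but the technical execution differs. The paper does not keep the sum indexed by cubes; instead it rewrites $\sum_i M_i'\,\widetilde g(\eta N_i'/M_i')$ as $\sum_{j=1}^{M}\widetilde g(\eta N_j'/M_j')$, a sum over $Y$-samples, and then reduces everything to computing $\mathbb{E}_{Y_1\sim f_2}\bigl[\widetilde g(\eta N_1'/M_1')\,\big|\,Y_1\bigr]$. The stochastic error is then handled not by a Taylor expansion in the fluctuations $U_j,V_j$ with a Chernoff split into well-populated and sparse cubes, but by a direct appeal to the closed-form inverse moment of a zero-truncated binomial, $\mathbb{E}[(M_i')^{-1}\mid Y_i]=(M\epsilon^d f_2(Y_i))^{-1}(1+O(1/(M\epsilon^d)))$, followed by an external smoothing lemma (Lemma~3.2 of \cite{Noshad2017}) that pushes the bias of the ratio through $\widetilde g$. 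Your route is more self-contained and makes the role of the truncation $\widetilde g$ explicit in taming sparse cubes; the paper's route is shorter but leans on a nontrivial cited lemma and the inverse-moment identity. One small correction on your collision step: a single collision between cubes $j$ and $k$ perturbs $\widehat D_g$ by $O((M_j'+M_k')/M)=O(\epsilon^d)$, not $O(1/M)$, since $\widetilde g$ is bounded but the weight $M_j'$ is typically $\Theta(M\epsilon^d)$; combined with the $O(\epsilon^{-2d}/F)$ expected collisions this still gives $O(1/(N\epsilon^d))$, so the conclusion is unchanged, and this is in fact the same order the paper obtains ($O(L^2/N^2)$ with $L=O(\epsilon^{-d})$) via its event-conditioning argument on $E_i$.
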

where $c_1$ is a positive real constant. 

\begin{remark}
In order for the estimator to be asymptotically unbiased, $\epsilon$ needs to be a function of $N$. The optimum bias rate of $O\of{\of{\frac{1}{N}}^{\gamma/(\gamma+d)}}$ can be achieved for $\epsilon=\of{\frac{1}{N}}^{\gamma/(\gamma+d)}$. 
\end{remark}

\begin{theorem}\label{variance}
Let $\eta=M/N$ be fixed. The variance of the  estimator \ref{est_f_def} can be bounded as 
\begin{align}
\mathbb{V}\of[\widehat{D}_{g}(X,Y)]\leq O\of{\frac{1}{N}}.
\end{align}
\end{theorem}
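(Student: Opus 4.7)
The plan is to bound the variance via the Efron--Stein inequality, viewing $\widehat{D}_g(X,Y)$ as a function of the $N+M$ i.i.d.\ sample inputs (the random hash $H_2$ is handled at the end through the law of total variance). Let $\widehat{D}_g^{(i)}$ denote the estimator after replacing $X_i$ by an independent copy $X_i'$, and $\widehat{D}_g^{\prime(j)}$ the estimator after replacing $Y_j$ by an independent copy $Y_j'$. Efron--Stein then gives
\begin{equation*}
\mathbb{V}[\widehat{D}_g] \;\le\; \tfrac12\sum_{i=1}^N \mathbb{E}\bigl[(\widehat{D}_g-\widehat{D}_g^{(i)})^2\bigr] \;+\; \tfrac12\sum_{j=1}^M \mathbb{E}\bigl[(\widehat{D}_g-\widehat{D}_g^{\prime(j)})^2\bigr],
\end{equation*}
so it suffices to show each squared increment is $O(1/N^2)$.

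For the $X$-replacement, only two buckets see a change: the source bucket $k$ of $X_i$ (where $N_k$ decreases by $1$) and the destination bucket $k'$ of $X_i'$ (where $N_{k'}$ increases by $1$); the $M_\ell$ counts are unchanged. Lipschitz continuity of $\widetilde{g}$ with constant $L_g$ yields $|M_k\widetilde{g}(\eta(N_k\!\pm\!1)/M_k)-M_k\widetilde{g}(\eta N_k/M_k)|\le L_g\eta$, and dividing by $M=\eta N$ shows the bucket-$k$ contribution to $\widehat{D}_g-\widehat{D}_g^{(i)}$ is at most $L_g/N$, with the same bound at $k'$. Hence $|\widehat{D}_g-\widehat{D}_g^{(i)}|=O(1/N)$ almost surely. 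For the $Y$-replacement, set $\phi(u,v):=u\,\widetilde{g}(\eta v/u)$; moving $Y_j$ from bucket $k$ to bucket $k'$ changes $\widehat{D}_g$ by $[\phi(M_k\!-\!1,N_k)-\phi(M_k,N_k)+\phi(M_{k'}\!+\!1,N_{k'})-\phi(M_{k'},N_{k'})]/M$, and since $\partial_u\phi=\widetilde{g}(\eta v/u)-(\eta v/u)\widetilde{g}'(\eta v/u)$ is uniformly bounded on the typical range of the density ratio (contained in $[\eta C_L/C_U,\eta C_U/C_L]$ by the density bounds $C_L\le f_i\le C_U$), each increment is again $O(1/N)$. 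Summing over all $N+M$ replacements yields $(N+M)\cdot O(1/N^2)=O(1/N)$ since $M=\eta N$.

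The main technical subtlety is the edge case in which a bucket becomes empty ($M_k\!=\!1\to 0$) or the ratio $\eta N_k/M_k$ falls outside its typical range, since then $\widetilde{g}(\eta N_k/M_k)$ is not uniformly bounded and the worst-case bounded-differences argument must be replaced by an expected-value bound. I would control this via concentration of the bucket counts: because $H_2$ hashes the $N+M$ points uniformly into $F=c_HN$ buckets, $\mathbb{E}[N_k]$ and $\mathbb{E}[M_k]$ are $O(1)$ with light-tailed deviations, so the contribution of atypical buckets to each squared-increment expectation is $o(1/N^2)$ and does not affect the sum. Finally, to absorb the hash randomness I apply $\mathbb{V}[\widehat{D}_g]=\mathbb{E}_{H_2}[\mathbb{V}[\widehat{D}_g\mid H_2]]+\mathbb{V}_{H_2}[\mathbb{E}[\widehat{D}_g\mid H_2]]$: the inner term is $O(1/N)$ by the argument above applied conditionally on $H_2$, while the outer term admits a symmetric single-bucket reassignment (bounded differences in the hash values) that also yields $O(1/N)$, completing the proof.
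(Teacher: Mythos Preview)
Your approach is essentially the same as the paper's: both apply the Efron--Stein inequality and argue that replacing a single sample perturbs at most two bucket counts by $\pm 1$, giving an $O(1/N)$ change in $\widehat{D}_g$ and hence an $O(1/N)$ variance after summing. The paper pairs $(X_i,Y_i)$ and replaces them jointly rather than treating $X$- and $Y$-replacements separately, and it simply asserts the bounded-difference claim (``that difference is just $O(1)$'') without your explicit Lipschitz/$\phi$-derivative computation, your acknowledgment of the empty-bucket and large-ratio edge cases, or your law-of-total-variance treatment of the hash randomness---so your sketch is, if anything, more careful than the paper's own argument.
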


\begin{remark}\label{variance_others}
The same variance bound holds for the random variable $\rho_i:=\frac{N_i}{M_i}$. The bias and variance results easily extend to R\'{e}nyi divergence estimation.
\end{remark}


\begin{algorithm} \label{algo}
\DontPrintSemicolon
\SetKwInOut{Input}{Input}\SetKwInOut{Output}{Output}
\Input{Data sets $X=\{X_1,...,X_N\}$, $Y=\{Y_1,...,Y_M\}$}

\BlankLine
\tcc{Find the sets of all hashed points in X and Y}
$X'\leftarrow H(X)$.\\
$Y'\leftarrow H(Y)$.\\

\For {each $i\in \mathcal{F}$}{
		 \tcc{Find the number of collisions at bin i}
		 $N_i\leftarrow |X'=i|$\\
         $M_i\leftarrow |Y'=i|$
		 }
$\widehat{D} \leftarrow \max\{\frac{1}{M}\sum_{\substack{M_i>0}} M_i\widetilde{g}\of{\eta N_i/M_i},0\},$

\Output{$\widehat{D}$}

\caption{Histogram Estimator of f-Divergence }
\end{algorithm}

We next show that, when $f_1$ and $f_2$ belong to the family of differentiable densities, we can improve the bias rate by applying the ensemble estimation approach in \cite{Kevin16,structure2016}. The Ensemble Hash-based (EHB) estimator is defined as follows.

\begin{definition}
Assume that the density functions are in the H\"{o}lder space $\Sigma(\gamma,L)$, which consists of functions on $\mathcal{X}$ continuous derivatives up to order $q=\floor{\gamma}\geq d$  and the $q$th partial derivatives are H\"{o}lder continuous with exponent $\gamma'=:\gamma-q$. 
Let $\mathcal{T}:=\{t_1,...,t_T\}$ be a set of index values with $t_i<c$, where $c>0$ is a constant. Let $\epsilon(t):=\floor{tN^{-1/2d}}$. The weighted ensemble estimator is defined as 

\begin{align}\label{EHB_def}
\widehat{D}_w:=\sum_{t\in \mathcal{T}}w(t)\widehat{D}_{\epsilon(t)},
\end{align}
where $\widehat{D}_{\epsilon(t)}$ is the hash based estimator of f-divergence, with the hashing parameter of $\epsilon(t)$. 
\end{definition}

\begin{theorem} \label{ensemble_theorem}
Let $T>d$ and  $w_0$ be the solution to:
\begin{align}
\min_w &\qquad \|w\|_2 \nonumber\\
\textit{subject to} &\qquad \sum_{t\in \mathcal{T}}w(t)=1, \nonumber\\
&\qquad \sum_{t\in \mathcal{T}}w(t)t^{i/d}=0, i\in \mathbb{N}, i\leq d.
\end{align}
Then the MSE rate of the ensemble estimator $\widehat{D}_{w_0}$ is $O(1/N)$.
\end{theorem}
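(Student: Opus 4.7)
The plan is to follow the ensemble-estimation framework of Moon et al.\ \cite{Kevin16}: first sharpen the single-parameter bias of Theorem \ref{bias_theorem} into an asymptotic expansion whose basis in $t$ matches the annihilator conditions of the linear program, then take a weighted combination that cancels all leading terms while preserving the target $D_g$.

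The first step is the refined bias expansion. Under $f_1,f_2\in\Sigma(\gamma,L)$ with $q=\lfloor\gamma\rfloor\geq d$, I would Taylor-expand $f_1$ and $f_2$ to order $q$ inside each hash bin (an axis-aligned cube of side $\epsilon$), using that the $q$th derivatives are H\"older; then Taylor-expand $g$ around the local density ratio, which lies in the compact set $[C_L/C_U,C_U/C_L]$ on which $g$ is smooth. Passing these expansions through \eqref{est_f_def} and taking expectation yields
\begin{align*}
\mathbb{B}\!\left[\widehat{D}_{\epsilon(t)}\right]
= \sum_{i=1}^{d} c_i\, t^{i/d}\, N^{-i/(2d)} + O\!\left(N^{-1/2}\right),
\end{align*}
with constants $c_i$ depending on $f_1,f_2,g$ but independent of $t$ and $N$. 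The calibration $\epsilon(t)\propto N^{-1/(2d)}$ is precisely what makes both remainders $O(\epsilon^\gamma)$ and $O(1/(N\epsilon^d))$ of Theorem \ref{bias_theorem} collapse to $O(N^{-1/2})$ when $\gamma\geq d$.

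By linearity of expectation,
\begin{align*}
\mathbb{B}\!\left[\widehat{D}_{w_0}\right]
= \sum_{i=1}^{d} c_i\, N^{-i/(2d)} \left(\sum_{t\in\mathcal{T}} w_0(t)\, t^{i/d}\right) + O\!\left(N^{-1/2}\right),
\end{align*}
and the constraints $\sum_t w_0(t)t^{i/d}=0$ for $i\leq d$ annihilate every dominant term, giving $\mathbb{B}[\widehat{D}_{w_0}]=O(N^{-1/2})$. For the variance, bounding the covariance matrix entrywise by Cauchy--Schwarz and applying Theorem \ref{variance} yields $\mathrm{Var}[\widehat{D}_{w_0}]\leq \|w_0\|_1^2\max_{t\in\mathcal{T}}\mathrm{Var}[\widehat{D}_{\epsilon(t)}]=O(\|w_0\|_1^2/N)$. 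Existence of $w_0$ and boundedness of $\|w_0\|_2$ (hence of $\|w_0\|_1\leq\sqrt T\|w_0\|_2$) follow because the linear program has $T>d$ unknowns against $d+1$ equality constraints of Vandermonde type in $s=t^{1/d}$, which are linearly independent for distinct $t$'s. Combining bias and variance gives $\mathrm{MSE}[\widehat{D}_{w_0}]=\mathbb{B}^2+\mathrm{Var}=O(1/N)$.

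The main obstacle is producing the refined bias expansion with the specific basis $\{t^{i/d}\}$ demanded by the constraint set. The contribution from interior bins is routine higher-order Taylor analysis, but since the paper deliberately avoids extra smoothness assumptions at $\partial\mathcal{X}$, I expect the fractional exponents $i/d$ to arise from the contribution of boundary-straddling bins, whose total measure scales like $\epsilon\cdot|\partial\mathcal{X}|$ and which naturally introduce expansions in $\epsilon^{1/d}$ (equivalently $t^{1/d}$). Ensuring that this boundary term admits the same $t^{i/d}$ expansion---rather than an unstructured $O(\epsilon)$ error that the ensemble cannot cancel---is the technically delicate step, paralleling but simpler than the KDE boundary analysis in \cite{Kevin16}.
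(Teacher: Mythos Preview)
Your overall plan---refine the single-scale bias into a polynomial expansion via higher-order Taylor expansion of $f_1,f_2$ inside each bin, expand $g$ about the local ratio, then annihilate the dominant terms with the ensemble weights and control the variance by Cauchy--Schwarz plus Theorem~\ref{variance}---is exactly the paper's route.

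Where you diverge is the final paragraph. The paper performs \emph{no} boundary analysis whatsoever: it simply Taylor-expands the densities to order $q$ in each bin (your ``interior'' step) and obtains
\[
\mathbb{B}\bigl[\widehat{D}_{g}\bigr]=\sum_{i=1}^{q}C''_{i}\,\epsilon^{i}+O\!\left(\frac{1}{N\epsilon^{d}}\right),
\]
with \emph{integer} powers of $\epsilon$, and then invokes the ensemble theorem of \cite{Kevin16} directly. With $\epsilon(t)\propto tN^{-1/(2d)}$ this produces basis functions $t^{i}$; the $t^{i/d}$ appearing in the theorem statement is a notational mismatch between the statement and the proof, not the trace of a boundary layer. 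So the ``technically delicate step'' you anticipate---extracting a structured $t^{i/d}$ expansion from boundary-straddling bins---is simply absent from the paper's argument, and your speculation about where the fractional exponents originate is unnecessary. Drop that part and your proposal \emph{is} the paper's proof.
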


\section{Online Divergence Estimation}

In this section we study the problem of online divergence estimation. In this setting we consider two data steams $X=\{X_{1},X_{2},...,X_{N}\}$ and $Y=\{Y_{1},Y_{2},...,Y_{N}\}$ with i.i.d samples, and we are interested in estimating the divergence between two data sets. The number of samples increase over time and an efficient update of the divergence estimate is desired. The time complexity of a batch update, which uses the entire update batch to compute the estimate at each time point, is $O(N)$, and it may not be so effective in cases which we need quick detection of any change in the divergence function. 

Algorithm \ref{Online} updates the divergence with amortized runtime complexity of order $O(1)$. Define the sets $X^N:=\{X_i\}_{i=1}^N$, $Y^N:=\{Y_i\}_{i=1}^N$, the number of $X$ and $Y$ samples in each partition, and the divergence estimate between $X^N$ and $Y^N$. Consider updating the estimator with new samples $X_{N+1}$ and $Y_{N+1}$. In the first and second lines of algorithm \ref{Online}, the new samples are added to the datasets and the values of  $N_i$ and $M_i$ of the bins in which the new samples fall. We can find these bins in $O(1)$ using a simple hashing. Note that once $N_i$ and $M_i$ are updated, the divergence measure can be updated, but the number of bins is not increased, by Theorem \ref{bias_theorem}, it is clear that the bias will not be reduced. Since increasing the number of bins requires recomputing the bin partitions, a brute force rebinning approach would have order $O(N)$ complexity, and it were updated $N$ times, the total complexity would be $O(N^2)$. Here we use a trick and update the hash function only when $N+1$ is a power of $2$. In the following theorem, which is proved in appendix, we show that the MSE rate of this algorithm is order $O(1/N)$ and the total rebinngn computational complexity is order $O(N)$.

\begin{theorem}\label{Online_Theorem}
MSE rate of the online divergence estimator shown in Algorithm \ref{algo} is order $O(1/N)$ and the total computational complexity is order $O(N)$.
\end{theorem}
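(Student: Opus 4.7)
The plan is to decouple Theorem~\ref{Online_Theorem} into two essentially independent claims: the $O(N)$ amortized runtime and the $O(1/N)$ MSE rate.

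For the runtime, the argument is a standard doubling-trick amortization. Between consecutive rebinning times, each new arrival $(X_{N+1},Y_{N+1})$ triggers only an $O(1)$ increment to the two affected counts $N_i$, $M_i$ (locating the bin is $O(1)$ through the current hash) and an $O(1)$ adjustment of the running sum in \eqref{est_f_def}, so over $N$ arrivals the non-rebinning cost is $O(N)$. At each rebinning time $N+1 = 2^k$ all $2^k$ stored points must be rehashed into $F = c_H 2^k$ new buckets, at cost $O(2^k)$; summing over $k = 0, 1, \dots, \lfloor \log_2 N \rfloor$ gives $\sum_k 2^k = O(N)$. Hence the total work is $O(N)$, i.e.\ amortized $O(1)$ per arrival.

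For the MSE, fix a current size $n$ and let $N' := 2^{\lfloor \log_2 n \rfloor}$ be the most recent rebinning time, so that $n/2 \le N' \le n$. By construction the online estimator at time $n$ coincides with the ensemble hash estimator $\widehat{D}_{w_0}$ of \eqref{EHB_def}, applied to the full set of $n$ samples, but with hash parameters $\epsilon(t) = t (N')^{-1/(2d)}$ rather than $t\, n^{-1/(2d)}$. Theorem~\ref{variance} applies directly to each constituent (which uses $n$ i.i.d.\ samples) and, since the weights $w_0$ are bounded and independent of $n$, yields variance $O(1/n)$. For the bias, under the differentiability hypothesis of Theorem~\ref{ensemble_theorem} each constituent $\widehat{D}_{\epsilon(t)}$ admits an expansion $\sum_{i=1}^{d} c_i\, \epsilon(t)^{i/d} + O(1/(n\,\epsilon(t)^d))$ whose coefficients $c_i$ are functionals of $f_1, f_2, g$ only. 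The weights $w_0$, defined through the scale-free constraints $\sum_t w_0(t) = 1$ and $\sum_t w_0(t)\, t^{i/d} = 0$, therefore annihilate the polynomial part for any global rescaling of $\epsilon(t)$; the residual term is $O((N')^{1/2}/n) \le O(1/\sqrt{n})$ because $N' \le n$. Squaring gives $O(1/n)$, so $\mathbb{E}[(\widehat{D}_{w_0} - D_g)^2] = O(1/n)$, which at the final time is $O(1/N)$.

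The main obstacle is confirming that the ``stale'' hash parameter (chosen for $N'$ but applied to $n \ge N'$ samples) does not spoil the ensemble cancellation. The resolution rests on two facts: the bias coefficients $c_i$ are independent of sample size, and the ensemble constraints are homogeneous in $t$, so rescaling $\epsilon(t) \mapsto \lambda\, \epsilon(t)$ by a factor $\lambda$ independent of $t$ leaves the cancellation intact. Since $N'/n \in [1/2, 1]$ only contributes bounded multiplicative constants into both the surviving bias residuals and the variance bound, combining the variance of Theorem~\ref{variance} with this controlled-bias argument yields the claimed $O(1/N)$ MSE together with the $O(N)$ total complexity.
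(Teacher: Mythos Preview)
Your runtime argument is essentially identical to the paper's: both use the doubling-trick amortization, summing the rehashing costs $\sum_{k} 2^{k} = O(N)$ while noting that each ordinary update costs $O(1)$.

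For the MSE bound the paper's own proof is considerably terser than yours. It simply observes that because $\epsilon$ is refreshed whenever $N$ doubles, the current $\epsilon$ is always within a factor of~$2$ of its optimal value, and since constant multiplicative factors do not change the asymptotic order of the bias terms, ``the bias bound always holds for the online estimation algorithm.'' The paper does not spell out that the ensemble estimator is what is being run online, nor does it verify that the ensemble cancellation survives the stale bandwidth. Your argument supplies exactly this: you make explicit that the online estimator must be the weighted ensemble $\widehat D_{w_0}$, and you give the structural reason the stale parameter is harmless---the constraints $\sum_{t} w_0(t)\,t^{i/d}=0$ are homogeneous in $t$, so a global rescaling $\epsilon(t)\mapsto \lambda\,\epsilon(t)$ with $\lambda = (N'/n)^{1/(2d)}\in[2^{-1/(2d)},1]$ preserves the cancellation of the polynomial bias terms, leaving only the $O((N')^{1/2}/n)=O(n^{-1/2})$ residual. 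This is a correct and genuinely useful elaboration that the paper leaves implicit. One minor slip: you write the bias expansion with powers $\epsilon(t)^{i/d}$, whereas the paper's expansion (Appendix~B) has integer powers $\epsilon^{i}$; this does not affect your scale-invariance argument, which goes through verbatim for either form.
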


\begin{algorithm} \label{algo}
\DontPrintSemicolon
\SetKwInOut{Input}{Input}\SetKwInOut{Output}{Output}
\Input{$X^N:=\{X_i\}_{i=1}^N,Y^N:=\{Y_i\}_{i=1}^N$\\
$\widehat{D} =\widehat{D}\of{X^N,Y^N}$\\
$(N_i, M_i)$\\
$(X_{N+1}, Y_{N+1})$}

\BlankLine
Add $X_{N+1}$ and Update $N_k$ s.t $H(X_{N+1})=k$. \;
Add $Y_{N+1}$ and Update $M_l$ s.t $H(Y_{N+1})=l$. \;
If $N+1=2^{i}$ for some $i$, Then \;
\qquad Update $\epsilon$ to the optimum value\;
\qquad Re-hash X and Y\;
\qquad Recompute $N_i$ and $M_i$ for $0\leq i \leq F$\;

Update $\widehat{D}$

\Output{$\widehat{D}$}

\caption{Online Divergence Estimation}
\label{Online}
\end{algorithm}

\section{Proofs}\label{Proof_Section}

In this section we derive the bias bound for the densities in H\"{o}lder smoothness class, stated in Theorem \ref{bias_theorem}. For the proofs of variance bound in Theorem \ref{variance}, convergence rate of EHB estimator in Theorem \ref{ensemble_theorem}, and online divergence estimator in Theorem \ref{Online_Theorem}, we refer the reader to the Appendix, provided as a supplementary pdf file.

Consider the mapping of the $X$ and $Y$ points by the hash function $H_1$, and let the vectors $\{V_i\}_{i=1}^{L}$ represent the distinct mappings of $X$ and $Y$ points under $H_1$. Here $L$ is the number of distinct outputs of $H_1$. In the following lemma we prove an upper bound on $L$.

\begin{lemma}\label{Lemma_bound_L}
Let $f(x)$ be a density function with bounded support $\mathbb{X}\subseteq\mathbb{R}^d$. 
Then if $L$ denotes the number of distinct outputs of the hash function $H_1$ (defined in \eqref{H1_def}) of i.i.d points with density $f(x)$, we have

\begin{align}
L &\leq O\of{\frac{1}{\epsilon^d}}.
\end{align}

\end{lemma}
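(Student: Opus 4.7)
The plan is to exploit the fact that $H_1$ is just a deterministic binning of $\mathbb{R}^d$ into axis-aligned cubes of side length $\epsilon$, so distinct outputs correspond bijectively to distinct cubes that contain at least one sample point. Since samples lie in the bounded support $\mathbb{X}$ almost surely, we only need to count the cubes in this grid that intersect $\mathbb{X}$.

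First I would unpack the definition: for each $k = (k_1,\dots,k_d) \in \mathbb{Z}^d$, the preimage $H_1^{-1}(k)$ is the half-open cube
\begin{equation*}
Q_k = \prod_{i=1}^{d}\bigl[\epsilon k_i - b,\; \epsilon(k_i+1) - b\bigr),
\end{equation*}
and these cubes tile $\mathbb{R}^d$ with disjoint interiors, each having volume $\epsilon^d$. Therefore the number of distinct hash outputs attained by any point set $S$ is
\begin{equation*}
\#\{H_1(x): x \in S\} = \#\{k \in \mathbb{Z}^d : Q_k \cap S \neq \emptyset\}.
\end{equation*}

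Next I would use the boundedness of $\supp(f) \subseteq \mathbb{X}$. Because $\mathbb{X}$ is bounded, it fits inside some axis-aligned box $B = \prod_{i=1}^{d}[a_i, A_i]$ with $A_i - a_i \leq D$ for a fixed constant $D$ depending only on $f$ (not on $N$ or $\epsilon$). Since all samples land in $\mathbb{X} \subseteq B$, we have
\begin{equation*}
L \leq \#\{k \in \mathbb{Z}^d : Q_k \cap B \neq \emptyset\} \leq \prod_{i=1}^{d}\Bigl(\bigl\lceil (A_i - a_i)/\epsilon \bigr\rceil + 1\Bigr) \leq \Bigl(\frac{D}{\epsilon} + 2\Bigr)^{d}.
\end{equation*}
For $\epsilon$ bounded above (which is the regime of interest, since we will eventually let $\epsilon \to 0$), the right-hand side is $O(\epsilon^{-d})$, where the constant depends on $D$ and $d$ only.

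There is not really a hard step here; the claim is essentially a volume-counting argument. The only care needed is to note that the bound is purely deterministic (it holds for every realization of the sample, not merely in expectation), and that the ``$O$'' absorbs a dimension-dependent constant and the $+2$ boundary terms uniformly in $\epsilon$ small enough. No property of $f$ beyond bounded support is needed for the upper bound, so the lemma holds for both $f_1$ and $f_2$ with the same asymptotic rate.
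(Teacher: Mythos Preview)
Your proof is correct and follows essentially the same approach as the paper: enclose the bounded support in an axis-aligned box and count the number of $\epsilon$-cubes from the $H_1$ grid that can intersect it, obtaining a bound of order $\epsilon^{-d}$. Your version is slightly more careful with the $+2$ boundary terms, but the idea is identical.
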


\begin{proof}

Let $x=\[x_1,x_2,...,x_d\]$ and define $\mathcal{X}_I$ as the region defined as 

\begin{align}
\mathbb{X}_I:=\{x|-c_X\leq x_i \leq c_X , 1\leq i \leq d \},
\end{align}
where $c_X$ is a constant such that  $\mathbb{X}\subseteq \mathbb{X}_I$.

$L$ is clearly not greater than the total number of bins created by splitting the region $\mathbb{X}$ into partitions of volume $\epsilon^d$. So we have
\begin{align}
L\leq \frac{(2c_X)^d}{\epsilon^d}.
\end{align}

\end{proof}

\textbf{Proof of Theorem \ref{bias_theorem}}
Let $\{N'_i\}_{i=1}^{L}$ and $\{M'_j\}_{j=1}^{L}$ respectively denote the number of collisions of $X$ and $Y$ points in the bins $i$ and $j$, using the hash function $H_1$. 
$E_i$ stands for the event that there is no collision in bin $i$ for the hash function $H_2$ with inputs  $\{V_i\}_{i=1}^{L}$. We have

\begin{align}\label{PEi}
P(E_i)&=\of{1-\frac{1}{F}}^L+L\of{\frac{1}{F}}\of{\frac{F-1}{F}}^{L-1}\nonumber\\
&=1-O\of{\frac{L}{F}}.
\end{align}
By definition, $$\widehat{D}_{g}(X,Y):= \frac{1}{M}\sum_{\substack{i\leq F \\ M_i>0}} M_i\widetilde{g}\of{\frac{\eta N_i}{M_i}}.$$

Therefore
\begin{align}\label{J2sum_1}
\E{\widehat{D}_{g}(X,Y)} &=\frac{1}{M}\E{\sum_{\substack{i\leq F \\ M_i>0}} M_i\widetilde{g}\of{\frac{\eta N_i}{M_i}}}\nonumber\\
&=\frac{1}{M}\sum_{\substack{i\leq F \\ M_i>0}}P(E_i)\E{ M_i\widetilde{g}\of{\frac{\eta N_i}{M_i}}\middle\vert E_i} \nonumber\\
&+\frac{1}{M}\sum_{\substack{i\leq F \\ M_i>0}}P(\overline{E_i})\E{ M_i\widetilde{g}\of{\frac{\eta N_i}{M_i}}\middle\vert  \overline{E_i} }.
\end{align}

We represent the second term in \eqref{J2sum_1} by $\mathbb{B}_H$. $\mathbb{B}_H$ has the interpretation as the bias error due to collisions in hashing. Remember that  $\overline{E_i}$ is defined as the event that there is a collision at bin $i$ for the hash function $H_2$ with inputs $\{V_i\}_{i=1}^{L}$. For proving as upper bound on $\mathbb{B}_H$, we first need to compute an upper bound on $\sum_{i=1}^L\E{ M_i\middle\vert  \overline{E_i}^j}$. This is stated in the following lemma.

\begin{lemma}\label{E_Mi}
We have
\begin{align}
\sum_{\substack{i\leq F \\ M_i>0}}\E{ M_i\middle\vert  \overline{E_i}}\leq O\of{L}
\end{align}
\end{lemma}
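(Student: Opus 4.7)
The plan is to recognize that the LHS, when read together with the probability weight $P(\overline{E_i})$ that multiplies it in the bias decomposition \eqref{J2sum_1}, is the expected number of $Y$-samples landing in hash bins where at least two distinct $H_1$-outputs $V_j$ collide under $H_2$. I would then bound this expected collision-bin $Y$-mass by $O(L)$ via a one-line linearity-of-expectation switch from bin indices $i$ to vertex indices $j$.

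Concretely, I would first strip the tautological restriction $M_i>0$ (since $M_i\,\mathbf{1}[M_i>0]=M_i$) and absorb $\mathbb{E}[M_i\mid\overline{E_i}]$ into its $P(\overline{E_i})$ prefactor, so that the statement becomes
\begin{align*}
\sum_{i=1}^F P(\overline{E_i})\,\mathbb{E}[M_i\mid\overline{E_i}] \;=\; \mathbb{E}\Big[\sum_{i=1}^F M_i\,\mathbf{1}[\overline{E_i}]\Big] \;\leq\; O(L).
\end{align*}
Using $M_i = \sum_{j=1}^L M'_j\,\mathbf{1}[H_2(V_j)=i]$ and the identity $\sum_i \mathbf{1}[H_2(V_j)=i]\,\mathbf{1}[\overline{E_i}] = \mathbf{1}[V_j\text{ collides}]$, the inner random sum rewrites as $\sum_{j=1}^L M'_j\,\mathbf{1}[\exists\, k\neq j:\, H_2(V_k)=H_2(V_j)]$. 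Since $H_2$ places the $L$ inputs uniformly and independently into $\mathcal{F}$ with $|\mathcal{F}|=F=c_H N$, Bernoulli's inequality gives $P(V_j\text{ collides}) = 1-(1-1/F)^{L-1} \leq (L-1)/F$. Taking expectations and using $\sum_j M'_j = M = \eta N$,
\begin{align*}
\mathbb{E}\Big[\sum_i M_i\,\mathbf{1}[\overline{E_i}]\Big] \;\leq\; \frac{(L-1)M}{F} \;=\; \frac{(L-1)\eta}{c_H} \;=\; O(L),
\end{align*}
as claimed.

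The main obstacle is the notational ambiguity of the lemma as written: read literally with $\mathbb{E}[M_i\mid\overline{E_i}] = \mathbb{E}[M_i\,\mathbf{1}[\overline{E_i}]]/P(\overline{E_i})$ and constant $P(\overline{E_i})$ across $i$ by symmetry, the unweighted sum scales like $FM/L$ rather than $L$ outside the extreme regime $\epsilon^d\sim 1/N$. Aligning the conditional expectation with its probability weight from \eqref{J2sum_1} at the outset is therefore the key conceptual step; once this alignment is made, the $O(L)$ bound follows immediately from the linearity argument above, and the resulting bound is exactly what is needed to drive $\mathbb{B}_H$ in the proof of Theorem~\ref{bias_theorem} to $O(1/(N\epsilon^d))$.
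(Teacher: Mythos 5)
Your proposal is correct, and it takes a genuinely different route that in fact repairs a real defect in the paper's own argument. The paper proves the lemma literally: it expands $M_i=\sum_{j=1}^L \mathbbm{1}_{\mathcal{A}_i}(j)M'_j$, computes $P(j\in\mathcal{A}_i,\overline{E_i})=O(L/F^2)$ and $P(\overline{E_i})=O(L/F)$ separately, and then divides the two big-O expressions in \eqref{Collision_bias_proof_1} to get a per-term factor $O(1/F)$. Dividing by a big-O is not legitimate: a denominator needs a lower bound, and in truth $P(\overline{E_i})=\Theta(L^2/F^2)$ --- in \eqref{PEi} the first-order $L/F$ contributions of the two summands cancel, so $1-O(L/F)$ only upper-bounds the deficit. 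Consequently the correct conditional ratio is $\Theta(1/L)$, each term $\mathbb{E}[M_i\mid\overline{E_i}]$ is $\Theta(M/L)$, and the literal unweighted sum is of order $M$ over the at most $L$ nonempty bins (or $FM/L$ over all $F$ bins), exactly the failure you diagnose. Your reformulation, which re-attaches each conditional expectation to its weight $P(\overline{E_i})$ from \eqref{J2sum_1} and bounds $\mathbb{E}\left[\sum_i M_i\mathbbm{1}[\overline{E_i}]\right]$ via the identity $\sum_i \mathbbm{1}[H_2(V_j)=i]\,\mathbbm{1}[\overline{E_i}]=\mathbbm{1}[V_j\text{ collides}]$ together with $1-(1-1/F)^{L-1}\le (L-1)/F$, involves no conditioning and no division at all, and is airtight. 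The trade-off: the paper's (flawed) per-term estimate feeds the stronger intermediate claim $\mathbb{B}_H=O(L^2/N^2)$, whereas your bound yields only $\mathbb{B}_H\le \widetilde{g}(R_{max})(L-1)/F=O(L/N)$; but since $L=O(\epsilon^{-d})$ this is $O\left(1/(N\epsilon^d)\right)$, the same order as the $O(L/F)$ term already carried through \eqref{B1_7th}, so the bias bound of Theorem \ref{bias_theorem} and everything downstream survive intact --- your version is both sound and sufficient.
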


\begin{proof}
Define $\mathcal{A}_i:= \{j: H_2(V_j)=i\}$. For each $i$ we can rewrite $M_i$ as
\begin{align}
M_i=\sum_{j=1}^L \mathbbm{1}_{\mathcal{A}_i}(j)M_j'.
\end{align}
Thus, 
\begin{align}\label{Collision_bias_proof_1}
\sum_{\substack{i\leq F \\ M_i>0}}\E{ M_i\middle\vert  \overline{E_i}}&=\sum_{\substack{i\leq F \\ M_i>0}}\E{ \sum_{j=1}^L\mathbbm{1}_{\mathcal{A}_i}(j)M_j'\middle\vert\overline{E_i}}\nonumber\\
&=\sum_{\substack{i\leq F \\ M_i>0}}\sum_{j=1}^L M_j'\E{ \mathbbm{1}_{\mathcal{A}_i}(j)\middle\vert\overline{E_i}}\nonumber\\
&=\sum_{\substack{i\leq F \\ M_i>0}}\sum_{j=1}^L M_j' P\of{ j\in\mathcal{A}_i|\overline{E_i}}\nonumber\\
&=\sum_{\substack{i\leq F \\ M_i>0}}\sum_{j=1}^L M_j' \frac{P\of{ j\in\mathcal{A}_i,\overline{E_i}}}{P(\overline{E_i})},
\end{align}
where $P\of{ j\in\mathcal{A}_i,\overline{E_i}}$ and $P(\overline{E_i})$ can be derived as
\begin{align}\label{P1}
P\of{ j\in\mathcal{A}_i,\overline{E_i}}&=\frac{1}{F}\of{1-\of{\frac{F-1}{F}}^{L-1}}=O\of{\frac{L}{F^2}},
\end{align}
and
\begin{align}\label{P2}
P(\overline{E_i})=1-P(E_i)=O\of{\frac{L}{F}}.
\end{align}

Plugging in \eqref{P1} and \eqref{P2} in \eqref{Collision_bias_proof_1} results in

\begin{align}\label{Collision_bias_proof_2}
\sum_{\substack{i\leq F \\ M_i>0}}\E{ M_i\middle\vert  \overline{E_i}}&=
\sum_{\substack{i\leq F \\ M_i>0}}\sum_{j=1}^L M_j' O\of{\frac{1}{F}}\nonumber\\
&=\sum_{\substack{i\leq F \\ M_i>0}} O\of{\frac{M}{F}}\nonumber\\
&=O\of{L},
\end{align}
where in the third line we use $\eta=M/N$ and $F=c_HN$. Now in the following lemma we prove a bound on $\mathbb{B}_H$.

\end{proof}

\begin{lemma}
Let $L$ denote the number of distinct outputs of the hash function $H_1$ of the $X$ and $Y$ sample points. The bias of estimator \eqref{est_f_def} due to hashing collision can be upper bounded by
\begin{align}
\mathbb{B}_H \leq O\of{\frac{L^2}{N^2}}
\end{align}
\end{lemma}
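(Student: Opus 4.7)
The plan is to reduce the definition of $\mathbb{B}_H$ to a combination of two bounds already in hand: the uniform probability bound $P(\overline{E_i}) = O(L/F)$ from \eqref{P2}, and the conditional expectation bound $\sum_i \E{M_i \mid \overline{E_i}} \leq O(L)$ from Lemma \ref{E_Mi}.

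The first step is to linearize the per-bin contribution so that the argument of $\widetilde g$ cannot blow up when $M_i$ is small. Since $g$ is Lipschitz with $g(1)=0$, we have $|g(x)| \leq L_g|x-1|$, and consequently the truncated version $\widetilde g(x) = \max\{g(x), g(C_L/C_U)\}$ satisfies $\widetilde g(x) \leq C_0(1+x)$ on $x \geq 0$ for a constant $C_0$ depending only on $L_g$, $C_L$, $C_U$. Multiplying by $M_i$ gives the key majorization
\[
M_i\, \widetilde g\!\left(\frac{\eta N_i}{M_i}\right) \;\leq\; C_0\bigl(M_i + \eta N_i\bigr),
\]
which removes both the nonlinearity in the ratio and its singularity at small $M_i$.

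The second step is to plug this into the definition of $\mathbb{B}_H$:
\[
\mathbb{B}_H \;\leq\; \frac{C_0}{M}\sum_{\substack{i \leq F \\ M_i > 0}} P(\overline{E_i})\, \E{M_i + \eta N_i \mid \overline{E_i}}.
\]
I would factor out $P(\overline{E_i}) = O(L/F)$ uniformly in $i$, then apply Lemma \ref{E_Mi} to the $M_i$ term and its obvious analogue for $N_i$, obtained by running the same computation with the roles of $X$ and $Y$ swapped (the same chain of equalities goes through, since the key bound $\sum_j N_j' = N = O(M)$ is symmetric to $\sum_j M_j' = M$). This yields
\[
\sum_{i:M_i>0} P(\overline{E_i})\, \E{M_i + \eta N_i \mid \overline{E_i}} \;\leq\; O\!\left(\frac{L}{F}\right) \cdot O(L) \;=\; O\!\left(\frac{L^2}{F}\right).
\]
Dividing by $M$ and substituting $F = c_H N$ and $M = \eta N$ gives $\mathbb{B}_H = O(L^2/(MF)) = O(L^2/N^2)$, as claimed.

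The main obstacle is the initial linearization: without replacing $\widetilde g(\eta N_i/M_i)$ by an affine function of $N_i$ and $M_i$, small values of $M_i$ make the conditional expectation difficult to control, and one cannot directly invoke Lemma \ref{E_Mi}. Once the Lipschitz majorant $C_0(M_i + \eta N_i)$ is in place, the remainder of the argument is a mechanical combination of the two probabilistic estimates already established.
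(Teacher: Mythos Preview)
Your argument follows the same skeleton as the paper's: factor out the uniform collision probability $P(\overline{E_i})=O(L/F)$, then control $\sum_i \E{M_i\mid\overline{E_i}}$ via Lemma~\ref{E_Mi}, and combine with $F=c_HN$, $M=\eta N$. The one substantive difference is how $\widetilde g(\eta N_i/M_i)$ is handled. The paper simply replaces $\widetilde g$ by a uniform constant $\widetilde g(R_{\max})$ and proceeds directly to Lemma~\ref{E_Mi}; this is shorter but leans on an upper truncation of $\widetilde g$ (or an a priori bound on the ratio) that is not made explicit in the stated assumptions. Your linearization $M_i\,\widetilde g(\eta N_i/M_i)\le C_0(M_i+\eta N_i)$ avoids that issue at the cost of invoking Lemma~\ref{E_Mi} twice, once for $M_i$ and once (by symmetry) for $N_i$. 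Both routes land on the same $O(L^2/N^2)$, so your proposal is correct; it trades brevity for a cleaner justification of the bound on $\widetilde g$.
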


\begin{proof}

From the definition of $\mathbb{B}_H$ we can write

\begin{align}
 \mathbb{B}_H:&=\frac{1}{M}\sum_{\substack{i\leq F \\ M_i>0}}P(\overline{E_i})\E{ M_i\widetilde{g}\of{\frac{\eta N_i}{M_i}}\middle\vert  \overline{E_i}} \nonumber\\
&= \frac{P(\overline{E_1})}{M}\sum_{\substack{i\leq F \\ M_i>0}}\E{ M_i\widetilde{g}\of{\frac{\eta N_i}{M_i}}\middle\vert \overline{E_i}}\nonumber\\
 &\leq \frac{P(\overline{E_1})\widetilde{g}(R_{max})}{M}\sum_{\substack{i\leq F \\ M_i>0}}\E{ M_i\middle\vert  \overline{E_i}}\nonumber\\
 &= \frac{P(\overline{E_1})\widetilde{g}(R_{max})}{M}O(L)\nonumber\\
 &= O\of{\frac{L^2}{N^2}},
\end{align}
where in the second line we used the fact that $P(\overline{E_i})=P(\overline{E_1})$. In the third line we used the upper bound for $\tilde{g}$, and in the fourth line we used the result in equation \eqref{Collision_bias_proof_2}.

\end{proof}

Now we are ready to continue the proof of the bias bound in \eqref{J2sum_1}. Let $E$ be defined as the event that there is no collision for the hash function $H_2$, and all of its outputs are distinct, that is, $E=\cap_{i=1}^F  E_i$

\eqref{J2sum_1} can be written as 

\begin{align}
&\E{\widehat{D}_{g}(X,Y)} \nonumber\\
&\quad=\frac{1}{M}\sum_{\substack{i\leq F \\ M_i>0}}P(E_i)\E{ M_i\widetilde{g}\of{\frac{\eta N_i}{M_i}}\middle\vert E_i}+ O\of{\frac{L}{F}}\nonumber\\
&\quad=\frac{P(E_1)}{M}\sum_{\substack{i\leq F \\ M_i>0}}\E{ M_i\widetilde{g}\of{\frac{\eta N_i}{M_i}}\middle\vert E_i}+ O\of{\frac{L}{F}}\nonumber\\
&\quad=\frac{P(E_1)}{M}\sum_{\substack{i\leq F \\ M_i>0}}\E{ M_i\widetilde{g}\of{\frac{\eta N_i}{M_i}}\middle\vert E}+ O\of{\frac{L}{F}}\label{B1_3rd}\\
&\quad=\frac{P(E_1)}{M}\E{\sum_{\substack{i\leq F \\ M_i>0}} M_i\widetilde{g}\of{\frac{\eta N_i}{M_i}}\middle\vert E}+ O\of{\frac{L}{F}}\nonumber\\
&\quad=\frac{P(E_1)}{M}\E{\sum_{i=1}^L M'_i\widetilde{g}\of{\frac{\eta N'_i}{M'_i}}\middle\vert E}+ O\of{\frac{L}{F}}\label{B1_5th}\\
&\quad=\frac{1-O(L/F)}{M}\E{ \sum_{i=1}^M\widetilde{g}\of{\frac{\eta N'_i}{M'_i}}}+O\of{\frac{L}{F}}\label{B1_6th}\\
&\quad=\mathbb{E}_{Y_1\sim f_2(x)} \E{ \widetilde{g}\of{\frac{\eta N'_1}{M'_1}}\middle\vert Y_1}+ O\of{\frac{L}{F}}\label{B1_7th},
\end{align}
where in \eqref{B1_3rd} we have used the fact that conditioned on $E_i$, $N_i$ and $M_i$ are independent of $E_j$ for $i\neq j$. In \eqref{B1_5th} since there is no collision in $H_2$, $M_i'$ and $N_i'$ are equal to $M_j$ and $N_j$ for some $i$ and $j$. Equation \eqref{B1_6th} is because the values $M_i'$ and $N_i'$ are independent of the hash function $H_2$ and its outputs, and finally in equation \eqref{B1_7th}, we used the fact that each set $N_i'$ and $M_i'$ are i.i.d random variables.

At this point, assuming that the variance of $\frac{N'_1}{M'_1}$ is upper bounded by $O(1/N)$ and using (Lemma 3.2 in \cite{Noshad2017}), we only need to derive $\E{\frac{N'_1}{M'_1}}$, and then we can simply find the RHS in \eqref{B1_7th}. Note that $N'_i$ and $M'_i$ are independent and have binomial distributions with the respective means of $NP_i^X$ and $MP_i^Y$, where $P_i^X$ and $P_i^Y$ are the probabilities of mapping $X$ and $Y$ points with the respective densities $f_0$ and $f_1$ into bin $i$. Hence,
\begin{align}\label{E_ratio}
&\E{\frac{N'_1}{M'_1}\middle\vert Y_1} = \E{N'_1\middle\vert Y_1} \E{{M'_1}^{-1}\middle\vert Y_1}.
\end{align}

Let $B_i$ denote the area for which all the points map to the same vector $V_i$. $\E{N'_i}$ can be written as:

\begin{align}\label{ENi_Expansion}
\E{N'_i} &=N\int_{x\in B_i}f_1(x)dx\nonumber\\
&=N\int_{x\in B_i}f_1(Y_i)+O(\|x-Y_i\|^\gamma)dx\nonumber\\
&=N\epsilon^df_1(Y_i)+ N\int_{x\in B_i}O(\|x-Y_i\|^\gamma)dx\nonumber\\
&= N\epsilon^df_1(Y_i)+ N\int_{x\in B_i+Y_i}O(\|x\|^\gamma)dx,
\end{align}
where in the second equality we have used the definition in \eqref{Holder_eq}.
Let define $B'_i:=\frac{1}{\epsilon}B_i+\frac{1}{\epsilon}Y_i$ and 

\begin{equation}\label{CY}
C_\gamma(Y_i):=\int_{x\in B'_i}\|x\|^\gamma dx.
\end{equation}

Note that $C_\gamma(Y_i)$ is a constant independent of $\epsilon$, since the volume of $B'_i$ is independent of $\epsilon$. By defining $x'=x/\epsilon$ we can write
\begin{align}\label{int_moment}
\int_{x'\in B'_i}\|x\|^\gamma dx&= \int_{x'\in B'_i}\epsilon^\gamma \|x\|^\gamma (\epsilon^d dx')
=C_\gamma(Y_i)\epsilon^{\gamma+d}
\end{align}

Also note that since the number of $X$ and $Y$ points in each bin are independent we have $\E{N'_i|Y_i}=\E{N'_i}$, and therefore 
\begin{align}
\E{N'_i|Y_i}=N\epsilon^df_1(Y_i)+O\of{N\epsilon^{\gamma+d}C_\gamma(Y_i)}.
\end{align}

Next, note that $\E{M'_i|Y_i}$ has a non-zero binomial distribution, for which the first order inverse moment can be written as \cite{znidaric2005}:

\begin{align}\label{EMi_Expansion}
\E{{M'_i}^{-1}|Y_i}&=\of[M\epsilon^df_2(Y_i)+O\of{M\epsilon^{\gamma+d}C(Y_i)}]^{-1}\nonumber\\
&\qquad\qquad\qquad\times \of{1+O\of{\frac{1}{M\epsilon^df_2(Y_i)}}}\nonumber\\
&=\of{M\epsilon^df_2(Y_i)}^{-1}\of[1+O\of{\epsilon^{\gamma}}+O\of{\frac{1}{M\epsilon^d}}]
\end{align}

Thus, \eqref{E_ratio} can be simplified as

\begin{align}\label{simple_exp}
\E{\frac{N'_1}{M'_1}\middle\vert Y_1}&=\frac{f_1(Y_1)}{\eta f_2(Y_1)}+O\of{\epsilon^\gamma}+O\of{\frac{1}{M\epsilon^d}}.
\end{align}

We use (Lemma 3.2 in \cite{Noshad2017}) and Remark \ref{variance_others}, and obtain
 \begin{align}\label{Expec_ratio}
\E{ \widetilde{g}\of{\frac{\eta N'_1}{M'_1}}\middle\vert Y_1} &= g\of{\frac{f_1(Y_1)}{f_2(Y_1)}}+O\of{\epsilon^\gamma}\nonumber\\
&\qquad+O\of{\frac{1}{M\epsilon^d}}+O(N^{-\half}).
\end{align}

Finally from \eqref{B1_7th} we get

\begin{align}\label{final_bias}
\mathbb{B}\of[\widehat{D}_{g}(X,Y)]&= O\of{\epsilon^\gamma}+O\of{\frac{1}{M\epsilon^d}}+O(N^{-\half})+ O(\frac{L}{F})\nonumber\\
&= O\of{\epsilon^\gamma}+O\of{\frac{1}{N\epsilon^d}},
\end{align}
where in the second equation we have used the upper bound on $L$ in Lemma \ref{Lemma_bound_L} and the fact that $M/N=\eta$.
Finally note that we can use a similar method with the same steps to prove the convergence of an estimator for R\'{e}nyi divergence.

\section{Discussion and Experiments}

In this section we compare and contrast the advantages of the proposed estimator with competing estimators, and provide numerical results. These show the efficiency of our estimator in terms of MSE rate and computational complexity.

\begin{table*}[ht]
\caption{Comparison of proposed estimator to Ensemble NNR \cite{Noshad2017}, Ensemble KDE \cite{Kevin16} and Mirror KDE \cite{Poczos2014_1}}
\centering
\begin{tabular}{c c c c c}
\hline\hline
Estimator &HB & NNR &Ensemble KDE  & Mirror KDE\\ [0.5ex] 
\hline
 MSE Rate & $O(1/N)$ &$O(1/N)$ &$O(1/N)$ &$O(1/N)$ \\
 Computational Complexity &$O(N)$ &$O(kN\log N)$ & $O(N^2)$ & $O(N^2)$ \\
 Required Smoothness ($\gamma$) & $d/2$ &$d$ &$(d+1)/2$ & $d/2$ \\
 Extra Smooth Boundaries & No & Yes & Yes & Yes \\
 Online Estimation & Yes & No & No & No \\
 Knowledge about Boundary & No & No & No & Yes \\[1ex]

\hline
\end{tabular}
\label{table_compare}
\end{table*}

Table \ref{table_compare} summarizes the differences between the proposed optimum estimator (EHB) with other competing estimators: Ensemble NNR \cite{Noshad2017}, Ensemble KDE \cite{Kevin16} and Mirror KDE \cite{Poczos2014_1}. In terms of MSE rate, all of these estimators can achieve the optimal parametric MSE rate of $O(1/N)$. In terms of computational complexity, our estimator has the best runtime compared to others. The smoothness parameter required for the optimum MSE rate is stated in terms of number of required derivatives of the density functions. The proposed estimator is the first divergence estimator that requires no extra smoothness at the boundaries. It is also the first divergence estimator that is directly applicable to online settings, retaining both the accuracy and linear total runtime. Finally, similar to NNR and Ensemble KDE estimators, the proposed estimator does not require any prior knowledge of the support of the densities.

We next compare the empirical performance of EHB to NNR, and the Ensemble KDE estimators. The experiments are done for two different types of f-divergence; KL-divergence and $\alpha$-divergence defined in \cite{alpha}.
Assume that $X$ and $Y$ are i.i.d. samples from independent truncated Gaussian densities.
Figure \ref{figure1}, shows the MSE estimation rate of $\alpha$-divergence with $\alpha=0.5$ of two Gaussian densities with the respective expectations of $[0, 0]$ and $[0, 1]$, and equal variances of $\sigma^2=I_2$ for different numbers of samples. For each sample size we repeat the experiment $50$ times, and compute the MSE of each estimator. While all of the estimators have the same asymptotic MSE rate, in practice the proposed estimator performs better. The runtime of this experiment is shown in Figure \ref{figure2}. The runtime experiment confirms the advantage of the EHB estimator compared to the previous estimators, in terms of computational complexity.


\begin{figure}	
	\centering
	\includegraphics[width=1\columnwidth]{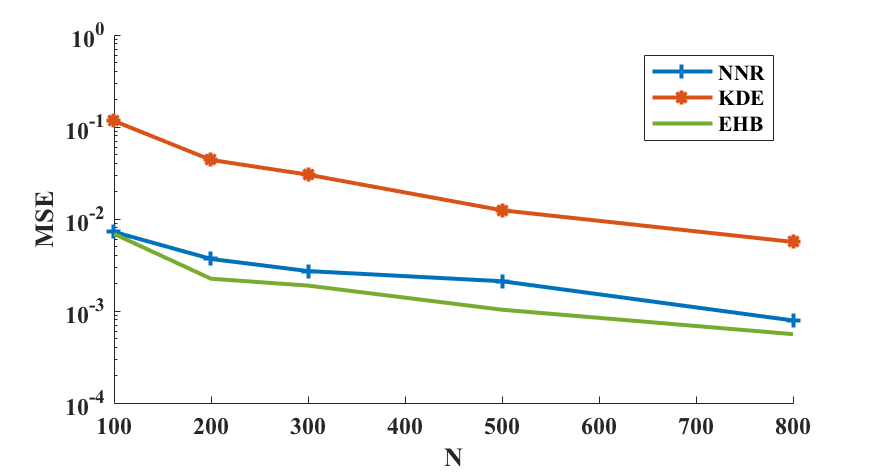}
    \caption{MSE comparison of $\alpha$-divergence estimators with $\alpha=0.5$ between two independent truncated 2D Gaussian densities with the respective expectations of $[0, 0]$ and $[0, 1]$, and equal variances of $\sigma_1^2=\sigma_2^2=I_2$, versus different number of samples.}
	\label{figure1}
\end{figure}

\begin{figure}	
	\centering
	\includegraphics[width=1\columnwidth]{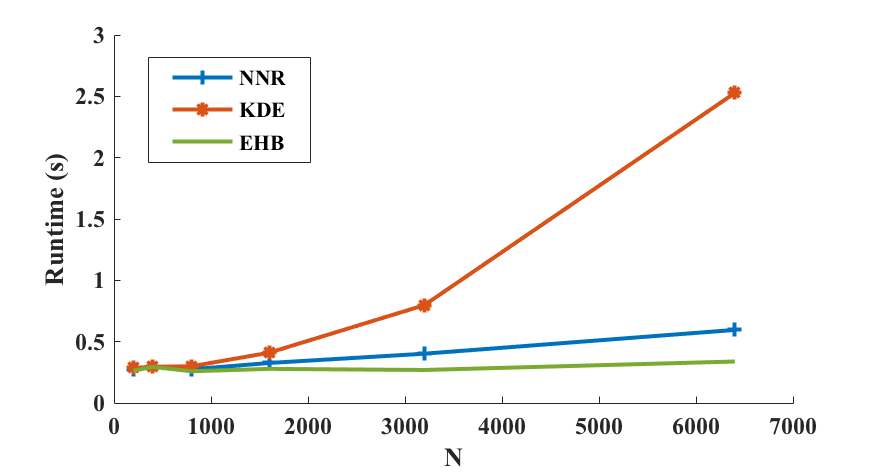}
    \caption{Runtime comparison of $\alpha$-divergence with $\alpha=0.5$ between two independent truncated 2D Gaussian densities with the respective expectations of $[0, 0]$ and $[0, 1]$, and equal variances of $\sigma_1^2=\sigma_2^2=I_2$, versus different number of samples.}
	\label{figure2}
\end{figure}

Figure \ref{figure2.5}, shows the comparison of the estimators of KL-divergence between two truncated Gaussian densities with the respective expectations of $[0, 0]$ and $[0, 1]$, and equal covariance matrices of $\sigma_1^2=\sigma_2^2=I_2$, in terms of their mean value and $\%95$ confidence band. The confidence band gets narrower for greater values of $N$, and EHB estimator has the narrowest confidence band.

\begin{figure}	
	\centering
	\includegraphics[width=1\columnwidth]{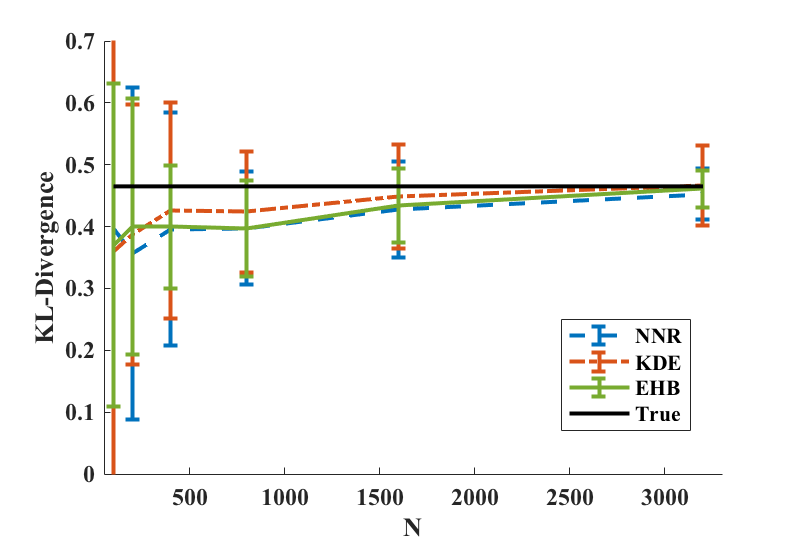}
    \caption{Comparison of the estimators of KL-divergence between two truncated Gaussian densities with the respective expectations of $[0, 0]$ and $[0, 1]$, and equal covariance matrices of $\sigma_1^2=\sigma_2^2=I_2$, in terms of their mean value and $\%95$ confidence band.}
	\label{figure2.5}
\end{figure}

In Figure \ref{figure3} the MSE rates of the three $\alpha$-divergence estimators are compared in dimension $d=4$, $\alpha=2$, for two independent truncated Gaussian densities with the expectations $\mu_1=\mu_2$ and covariances $\sigma_1^2=\sigma_2^2=I_4$, versus different number of samples.
\begin{figure}	
	\centering
	\includegraphics[width=1\columnwidth]{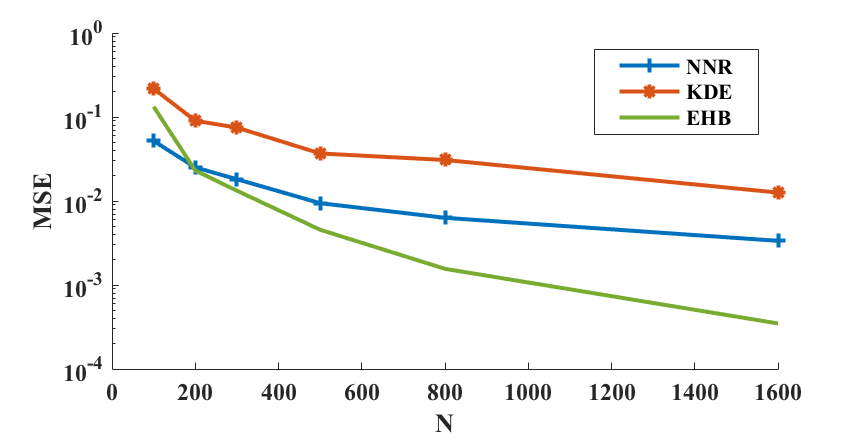}
    \caption{MSE estimation rate of $\alpha$-divergence with $\alpha=2$ between two independent truncated Gaussian densities with dimension $d=4$ and equal expectations $\mu_1=\mu_2$ and covariance matrices $\sigma_1^2=\sigma_2^2=I_4$, versus different number of samples.}
	\label{figure3}
\end{figure}
\section{Conclusion}
In this paper we proposed a fast hash based estimation method for f-divergence. We obtained bias and variance convergence rates, and validated our results by numerical experiments. Extending the method to hash-based mutual information estimation is a worthwhile topic for future work.


\bibliographystyle{ieeetr}
\bibliography{citations.bib}
\onecolumn
\newpage

\section*{A. Variance Proof}
\begin{proof}[\textbf{Proof of Theorem \ref{variance}}: ]
 The proof is based on Efron-Stein inequality. We follow similar steps used to prove the variance of NNR estimator in \cite{Noshad2017}. Note that the proof for variance of 
 $\rho_i=N_i/(M_i)$ is contained in the the variance proof for $\widehat{D}_g(X,Y)$.
Assume that we have two sets of nodes $X_i$, $1\leq i \leq N$ and $Y_j$ for $1\leq j \leq M$. Here for simplicity we assume that $N = M$, however, the extension of the proof to the case when $M$ and $N$ are not equal, is straightforward, by considering a number of virtual points, as considered in \cite{Noshad2017}. Define $Z_i:=(X_i,Y_i)$. For using the Efron\hyp Stein inequality on $Z:=(Z_1,...,Z_N)$, we consider another independent copy of $Z$ as  $Z':=(Z'_1,...,Z'_N)$ and define $Z^{(i)}:=(Z_1,...,Z_{i-1},Z'_i,Z_{i+1},...,Z_N)$. Define $\widehat{D}_g(Z):=\widehat{D}_g(X,Y)$. By applying Efron\hyp Stein inequality we have

\begin{align}\label{variance_main}
\mathbb{V}\of[\widehat{D}_g(Z)] &\leq \frac{1}{2} \sum_{i=1}^N \mathbb{E}\left[\left(\widehat{D}_g(Z)-\widehat{D}_g(Z^{(i)})\right)^2\right]\nonumber\\
&= \frac{N}{2} \mathbb{E}\left[\left(\widehat{D}_g(Z)-\widehat{D}_g(Z^{(1)})\right)^2\right]  \nonumber\\
&\leq \frac{N}{2}\E{\left(\frac{1}{N}\sum_{\substack{i\leq F \\ M_i>0}} M_i\widetilde{g}\of{\frac{\eta N_i}{M_i}}- \frac{1}{N}\sum_{\substack{i\leq F \\ M_i>0}} M_i^{(1)}\widetilde{g}\of{\frac{\eta N_i^{(1)}}{M_i^{(1)}}} \right)^2} \nonumber\\
&=\frac{1}{2N}\E{\left(\sum_{\substack{i\leq F \\ M_i>0}}\left(M_i\widetilde{g}\of{\frac{\eta N_i}{M_i}} - M_i^{(1)}\widetilde{g}\of{\frac{\eta N_i^{(1)}}{M_i^{(1)}}} \right)\right)^2} \nonumber\\
&=\frac{1}{2N}O\of{1}= O(\frac{1}{N}).
\end{align}
where in the last line we used the fact that $M_i$ and $M_i'$ can be different just for two of $i\leq F$, and that difference is just $O(1)$. So, the proof is complete.
\end{proof}



\section*{B. Proof of Theorem \ref{ensemble_theorem}}

Assume that the densities have bounded derivatives up to the order $q$. Then the Taylor expansion of $f(y)$ around $f(x)$ is as follows 
\begin{equation}\label{taylor}
f(y)=f(x)+\sum_{|i|\leq q}\frac{D^i f(x)}{i!}\|y-x\|^i+O\of{\|y-x\|^q}.
\end{equation}

Therefore, similar to \eqref{ENi_Expansion} and using \eqref{int_moment} we can write

\begin{align}\label{Ensemble_1}
\E{N'_i} &=N\int_{x\in B_i}f_1(x)dx\nonumber\\
&=N{\displaystyle\int_{x\in B_i}f(Y_i)+\sum_{|j|\leq q}\frac{D^j f(Y_i)}{j!}\|x-Y_i\|^j+O\of{\|x-Y_i\|^q}} dx\nonumber\\
&=N\epsilon^df_1(Y_i)+ N\sum_{|j|\leq q}\frac{D^j f(Y_i)}{j!}C_j(Y_i)\epsilon^{|j|+d}+O\of{NC_q(Y_i)\epsilon^{q+d}}\nonumber\\
&=N\epsilon^d\of[f_1(Y_i)+ \sum_{l=1}^q C'_l(Y_i)\epsilon^{l}+O\of{C_q(Y_i)\epsilon^{q}}],
\end{align}
where $$C'_{|j|}(Y_i):=\sum_{|j|\leq q}\frac{D^j f(Y_i)}{j!}C_j(Y_i).$$

Similarly we obtain

\begin{align}\label{Ensemble_2}
\E{(M'_i)^{-1}} &=M^{-1}\epsilon^{-d}\of[f_2(Y_i)+ \sum_{l=1}^q C'_l(Y_i)\epsilon^{l}+O\of{C_q(Y_i)\epsilon^{q}}]^{-1}\of{1+O\of{\frac{1}{M\epsilon^df_2(Y_i)}}}.
\end{align}

The rest of the proof follows by using the same steps as used in equations \eqref{simple_exp}-\eqref{final_bias}, and we get

\begin{align}\label{final_ensemble}
\mathbb{B}\of[\widehat{D}_{g}(X,Y)]&= \sum_{i=1}^q C''_i\epsilon^{i}+O\of{\frac{1}{N\epsilon^d}},
\end{align}
where $C''_1,...,C''_2$ are constants. 

Now are ready to apply the ensemble theorem (\cite{Kevin16}, Theorem 4). Let $\mathcal{T}:=\{t_1,...,t_T\}$ be a set of index values with $t_i<c$, where $c>0$ is a constant. Let $\epsilon(t):=\floor{tN^{-1/2d}}$. The proof completes by using the ensemble theorem in (\cite{Kevin16}, Theorem 4) with the parameters $\psi_i(t)=t^{i/d}$ and $\phi'_{i,d}(N)=\phi_{i,\kappa}(N)/N^{i/d}$. So the following weighted ensemble has the MSE convergence rate of $O(1/N)$:

\begin{align}\label{EHB_def}
\widehat{D}_w:=\sum_{t\in \mathcal{T}}w(t)\widehat{D}_{\epsilon(t)}.
\end{align}


\subsection*{C. Proof of Theorem \ref{Online_Theorem}: }

We first argue that amortized runtime complexity of the online estimation algorithm is order $O(1)$ for each update after adding new samples. Note that when we add a new pair of samples $X_{N+1}$ and $Y_{N+1}$, if $N+1\neq 2^k$ for some integer $k$, we only find $H(X_{N+1})$ and $H(Y_{N+1})$ and update the corresponding $M_i$ and $N_i$, which take a constant time. But, only when $N+1=2^k$ for some integer $k$, we need $O(N)$ time complexity to update $\epsilon$ and therefore the hash function. Thus, if we have $N=2^k$ nodes added to the estimation algorithm, the total complexity due to rehashing, $T_H$, is as follows:

\begin{align}
T_H=1+2+2^2+...+2^k=2^{k+1}-1=2N-1.
\end{align}
So, the amortized runtime complexity per each time step is $O(\frac{2N-1}{N})=O(1)$. So overall, the amortized computational complexity is order $O(1)$.
Finally, note that since we update $\epsilon$ when $N$ doubles, it is at most by a factor of $2$ away from the optimum $\epsilon$. Since constant factor doesn't affect the asymptotic order of the bias error, the bias bound always holds for online estimation algorithm.

\end{document}